\newtheorem{proposition}{Proposition}[section]
\newtheorem{corollary}{Corollary}[section]
\newtheorem{Remark}{Remark}[section]
\theoremstyle{example}
\DeclareMathOperator*{\argmax}{\arg\max}
\DeclareMathOperator*{\argmin}{\arg\min}
\def\bfu{{\bf u}}
\def\bfv{{\bf v}}
\def\bfh{{\bf h}}
\def\bfx{{\bf x}}
\def\bfz{{\bf z}}
\def\bfb{{\bf b}}
\def\bfX{{\bf X}}
\def\bfP{{\bf P}}
\def\bfA{{\bf A}}
\def\bfI{{\bf I}}
\def\bfzero{{\bf 0}}
\def\bfone{{\bf 1}}
\def\bfmu{\boldsymbol{\mu}}
\def\bftheta{\boldsymbol{\theta}}
\def\bfSigma{\boldsymbol{\Sigma}}
\title{Cauchy robust principal component analysis with applications to high-deimensional data sets}
\author{Ayisha Fayomi $^1$, Yannis Pantazis $^2$, Michail Tsagris $^3$ \\
and Andrew T.A. Wood $^4$ \\
\\
$^1$ Department of Statistics, King Abdulaziz University, \\
Jeddah, Saudi Arabia \\
\href{mailto:afayomi@kau.edu.sa}{afayomi@kau.edu.sa} \\
$^2$ Institute of Applied \& Computational Mathematics, \\
Foundation for Research \& Technology - Hellas,
Heraklion, Greece \\
\href{mailto:pantazis@iacm.forth.gr}{pantazis@iacm.forth.gr} \\
$^3$ Department of Economics, University of Crete, \\
Gallos Campus, Rethymnon, Greece  \\
\href{mailto:mtsagris@uoc.gr}{mtsagris@uoc.gr} \\
$^4$ Research School of Finance, Actuarial Studies \& Statistics, \\
Australian National University, Canberra, Australia \\
\href{mailto:Andrew.Wood@anu.edu.au}{Andrew.Wood@anu.edu.au}
}
\begin{document}
\maketitle

\begin{abstract}
% Highlight: (1) Open new directions in robust data analysis & (2) capable of handling really high dimensional data.
Principal component analysis (PCA) is a standard dimensionality reduction technique used in various research and applied fields. From an algorithmic point of view,  classical PCA can be formulated in terms of operations on a multivariate Gaussian likelihood. As a consequence of the implied Gaussian formulation, the principal components are not robust to outliers. In this paper, we propose a modified formulation, based on the use of a multivariate Cauchy likelihood instead of the Gaussian likelihood, which has the effect of robustifying the principal components.  We present an algorithm to compute these robustified principal components. We additionally derive the relevant influence function of the first component and examine its theoretical properties. Simulation experiments on high-dimensional datasets demonstrate that the estimated principal components based on the Cauchy likelihood outperform or are on par with existing robust PCA techniques.
\end{abstract}

\section{Introduction}
\label{Intro.3-CPC}
In the analysis of multivariate data, it is frequently desirable to employ statistical methods which are insensitive to the presence of outliers in the sample. To address the problem of outliers, it is important to develop robust statistical procedures. Most statistical procedures include explicit or implicit prior assumptions about the distribution of the observations, but often without taking into account the effect of outliers. The purpose of this paper is to present a novel robust version of PCA which has some attractive features.

%The most popular distribution to be assumed is a Gaussian distribution but, unfortunately, despite its obvious attractions the Gaussian distribution sometimes leads to unsatisfactory results, due to its lack of robustness to outliers. It has been shown, here in section \ref{NonRob.SPCA} and elsewhere, that using the Gaussian distribution not produce robust estimators of population characters of interest. Many alternative procedures which have better robustness properties have been suggested and this paper suggests one also.

Principal components analysis (PCA) is considered to be one of the most important techniques in statistics. However, the classical version of PCA depends on either a covariance or a correlation matrix, both of which are very sensitive to outliers. We develop an alternative method to classical PCA, which is far more robust, by using a multivariate Cauchy likelihood to construct a robust principal components (PC) procedure. It is an adaptation of the classic method of PCA obtained by replacing the Gaussian log-likelihood function by the Cauchy log-likelihood function, in a sense that will be explained in section \ref{Lik.Inter.PCA}.  Although we do not claim that the interpretation of standard PCA in terms of operations on a Gaussian likelihood is new, see Bolton and Krzanowski, this fact does not appear to have been exploited in the development of a robust PCA procedure, as we do in this paper.  An important reason for using the multivariate Cauchy likelihood is that this likelihood has only one maximum point, but the single most important motivation is that it leads to a robust procedure.

In the next section we review briefly some of the techniques employed for estimating parameters and for directing a PCA in ways which are robust against the presence of outliers. We also present robustness preliminaries that include some important techniques which are necessary to assess whether the method used is robust or not. In Section \ref{CPCA} we develop the Cauchy-PCA and theoretically explore its robustness properties. Finally, in Section \ref{Comp.Algo} we present the numerical algorithms for creating Cauchy PCs, and also give the results of a number of very high-dimensional real-data and simulated examples.  Our approach is seen to be competitive with, and often gives superior results to, that of the projection pursuit algorithm of Croux et al. (2007, 2013).  Finally we conclude the paper in Section \ref{concl.}. 

\subsection{Literature review on robust PCA} \label{Lit.Review}
It is well known that PCA is an important technique for high-dimensional data reduction. PCA is based on the sample covariance matrix $\hat{{\bf \Sigma}}$ and it involves searching for a linear combination $y_{j}= {\bf u}^{T}{\bf x}_{j}$ of the ${\bf x}$ components of the vector that maximize the sample variance of the components of $y$. According to \citet{Mardia&Kent&Bibby:1979}, the solution will be given by the equation
\[\hat{\bf \Sigma}={\bf U \Lambda U}^{T},\]
where ${\bf \Lambda}= \hbox{diag}\{\lambda_{1}, \ldots, \lambda_{p}\}$ and its diagonal elements $\lambda_{i}$ are the sample variances, while ${\bf U}$ is an orthogonal matrix, i.e. ${\bf U U}^{T} ={\bf U}^{T}{\bf U}={\bf I}_{p}$, whose columns ${\bf u}_{i}$ are the corresponding eigenvectors which represent the linear combinations.
[[The principal components are efficiently estimated in practice via Singular Value Decomposition (SVD) (cite Lanczos for an efficient algorithm).]]

Classical PCA, unfortunately, is non-robust, since it based on the sample covariance or sample correlation matrix which are very sensitive to outlying observations; see section \ref{NonRob.SPCA}. However, this problem has been handled by two different methods which result in robust versions of PCA by:
 \begin{description}
   \item[i.] replacing the standard covariance or correlation matrix with a robust estimator; or
   \item[ii.] maximising (or minimising) a different objective function to obtain a robust PCA.
 \end{description}
Many different proposes had been developed to carry out robust PCA, such that using projection pursuit PP, $M-$estimators and so on. 

Despite maximum likelihood estimation, perhaps, being considered as the most important statistical inference method, sometimes this approach can lead to improper results when the underlying assumptions are not satisfied, for instance, when data contain outliers or deviate slightly from the supposed model. A generalization of maximum likelihood estimation proposed by \citet{Huber:1964} which is called $M$-estimation, aims to produce a robust statistic by constructing approaches that are resistant to deviations from the underline assumptions. $M$-estimators were also defined for the multivariate case by \citet{Maronna:1976}. 

\citet{Campbell:1980} provided a procedure for robust PCA by examining the estimates of means and covariances which are less affected by outlier observations, and by exploring the observations which have a large effect on the estimates. He replaced the sample covariance sample by an $M-$estimator. \citet{Hubert:2003} introduced a new approach to create robust PCA. It combines the advantages of two methods, the first one is based on replacing the covariance or correlation matrix by its robust estimator, while the second one is based on maximizing the objective function for this robust estimate.

A robust PCA based on the projection pursuit (PP) method was developed by \citet{Li:1985}, using Huber's $M$-estimator of dispersion as the projection index. The objective of PP is to seek projections, of the high-dimensional data set onto low-dimensional subspaces, that optimise a function of "interestingness". The function that should be optimised is called an index or objective function and its choice depends on a feature that the researcher is concerned about. This property gives the PP technique a flexibility to handle many different statistical problems range from clustering to identifying outliers in a multivariate data set. 

\citet{Bolton:1999} characterized the PC's for PP in terms of maximum likelihood under the assumption of normality. PCA can be considered as a special case of PP as well as many other methods of multivariate analysis. \citet{Li:1985} used Huber's $M$-estimator of dispersion as projective index to develop a robust PCA based on the PP approach. The sample median was used as a projective index to develop a robust PCA by \citet{Xie:1993}. In their simulation studies, \citet{Xie:1993} observed a PCA resistant to outliers and deviations from the normal distribution.
%\citet{Li:1985} used Huber's M-estimator of dispersion as projective index to develop a robust PCA based on the PP approach, while \citet{Xie:1993} developed a robust PCA also based on PP but by using the sample median as projective index since it is the most robust estimator of scale.
\cite{croux2007algorithms,croux2013robust} also suggested a robust PCA using projection pursuit and we will contrast our methodology against their algorithm.  

\section{Preliminaries on standard PCA} \label{NonRob.SPCA}
PCA is an orthogonal linear transformation that projects the data to a new coordinate system according to the variance of each direction. Given a data matrix $\bfX\in\mathbb R^{n\times p}$ with each row correspond to a sample, the first direction $\bfu_1$ that  maximizes the variance is defined through
\begin{equation*}
\bfu_1 = \argmax_{||\bfu||_2=1} ||(\bfX - \bfone_n\bar{\bfx}^T) \bfu||_2^2,
\end{equation*}
where $\bfone_n$ is an $n$-dimensional vector whose elements are all set to 1 while $\bar{\bfx}=\frac{1}{n}\sum_{i=1}^n \bfx_i$ is the empirical mean. 
The process is repeated $k$ times and at each iteration the to-be-estimated principal direction has to be orthogonal to all previously-computed principal directions. Thus, the 
$k$-th direction which has to be orthogonal to the previous ones is defined by
\begin{equation*}
\bfu_k = \argmax_{||\bfu||_2=1} ||(\bfX - \bfone_n\bar{\bfx}^T) \bfu||_2^2 \ \ \text{subject to} \ \  \bfu_k \perp \bfu_j \ \text{with} \ j=1,...,k-1 \ .
\end{equation*}

\subsection{Non-robustness of standard PCA}
We will show that the influence function for the largest eigenvalue of the covariance matrix and the respective eigenvector are unbounded with respect to the norm of an outlier sample. Suppose that $\bfSigma$ is the covariance matrix of a population with distribution function $F$, i.e.,
\begin{equation}\label{Pop.Cov.}
{\bfSigma} = \int_{\mathbb R^p} (\bfx-\bfmu)(\bfx-\bfmu)^{T} dF(\bfx),
\end{equation}
where $\bfmu=\int_{\mathbb R^p} \bfx dF(\bfx)$ corresponds to the mean vector. Assume that the leading eigenvalue of $\bfSigma$ has multiplicity 1, then we denote it by $\lambda$ and the leading eigenvector by $\hat{\bfu}$ (i.e., $\bfu_{1}=\hat{\bfu}$).

% {\bf Define the influence function.}
Let $T$ be an arbitrary functional, $F$ a distribution and ${\bf z}\in\mathbb R^p$ an arbitrary point in the relevant sample space. The influence function is defined as 
\begin{equation}
IF_T(\bfz;F) = \lim_{\epsilon\to 0+} \frac{T((1-\epsilon) F + \epsilon \Delta_{\bfz}) - T(F)}{\epsilon},
\end{equation}
where $\Delta_{\bfz}$ is a unit point mass located at $\bfz$.

A robust estimator for $T$ means that the influence function is bounded with respect to the norm of the outlier $\bf z$.

%We prove that standard PCA is not robust by showing that the influence function for the eigenvalue $\lambda$ with the highest value and the respective eigenvector $\boldsymbol{\mu}$ is unbounded. 

\begin{proposition}
% {\bf Theorem on influence function for first eigenvalue and eigenvector.}
The influence function for the leading eigenvector of $\bfSigma$ is given by\footnote{We use ${\bf A}^+$ to denote the Moore-Penrose inverse of a matrix $\bf A$.} 
\begin{equation}
IF_{\hat{\bfu}} (\bfz, F) = - \big( (\bfz-\bfmu)^{T}\hat{\bfu} \big) (\bfSigma-\lambda \bfI_p)^{+} (\bfz-\bfmu).
\end{equation}
Similarly, the IF for the largest eigenvalue of ${\bfSigma}$ is
\begin{equation}
IF_\lambda (\bfz, F) = 
\big( (\bfz-\bfmu)^{T}\hat{\bfu} \big)^2 - \lambda. 
\end{equation}
% where ${\lambda}_{[\epsilon]}$ is defined in (\ref{lamda-eps}). 
\end{proposition}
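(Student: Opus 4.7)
My plan is to compute the influence functions by first-order perturbation analysis of the contaminated distribution $F_\epsilon = (1-\epsilon)F + \epsilon \Delta_{\bfz}$.  Write $\bfmu(\epsilon)$ and $\bfSigma(\epsilon)$ for the mean and covariance under $F_\epsilon$, and let $\lambda(\epsilon)$ and $\hat{\bfu}(\epsilon)$ be the leading eigenpair of $\bfSigma(\epsilon)$, continuous at $\epsilon=0$ by the simplicity assumption on $\lambda$.  The influence functions are then $\dot{\lambda}(0)$ and $\dot{\hat{\bfu}}(0)$.

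First I would expand the mean as $\bfmu(\epsilon) = \bfmu + \epsilon(\bfz-\bfmu)$ and insert this into the definition \eqref{Pop.Cov.} applied to $F_\epsilon$.  Splitting the integral into the $F$-part and the point-mass part and using $\int(\bfx-\bfmu)dF = 0$, the cross terms in the shifted mean collapse and one obtains the closed form $\bfSigma(\epsilon) = (1-\epsilon)\bfSigma + \epsilon(1-\epsilon)(\bfz-\bfmu)(\bfz-\bfmu)^T$.  Differentiating at $\epsilon = 0$ yields
\begin{equation*}
\dot{\bfSigma}(0) = (\bfz-\bfmu)(\bfz-\bfmu)^T - \bfSigma.
\end{equation*}

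Next I would differentiate the eigenvalue equation $\bfSigma(\epsilon)\hat{\bfu}(\epsilon)=\lambda(\epsilon)\hat{\bfu}(\epsilon)$ at $\epsilon=0$ to get $(\bfSigma-\lambda\bfI_p)\dot{\hat{\bfu}} = \dot{\lambda}\hat{\bfu} - \dot{\bfSigma}\hat{\bfu}$.  Left-multiplying by $\hat{\bfu}^T$ kills the left-hand side because $\hat{\bfu}^T(\bfSigma-\lambda\bfI_p)=0$, and produces $\dot{\lambda}(0) = \hat{\bfu}^T\dot{\bfSigma}(0)\hat{\bfu} = ((\bfz-\bfmu)^T\hat{\bfu})^2 - \lambda$, which is the second claim.

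For the eigenvector part, the main subtlety is that $\bfSigma-\lambda\bfI_p$ is singular, so I need to invert using the Moore--Penrose pseudoinverse on the orthogonal complement of $\hat{\bfu}$.  Substituting $\dot{\lambda}(0)$ back, the right-hand side simplifies to $-((\bfz-\bfmu)^T\hat{\bfu})\bigl[(\bfz-\bfmu) - ((\bfz-\bfmu)^T\hat{\bfu})\hat{\bfu}\bigr]$, which lies in the range of $\bfSigma-\lambda\bfI_p$ (it is orthogonal to $\hat{\bfu}$), so applying $(\bfSigma-\lambda\bfI_p)^+$ is justified.  Since the unit-norm constraint forces $\dot{\hat{\bfu}}(0)\perp\hat{\bfu}$, this pseudoinverse delivers the unique solution, and using $(\bfSigma-\lambda\bfI_p)^+\hat{\bfu}=0$ the bracketed term collapses to $(\bfz-\bfmu)$, giving the stated formula.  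The only real obstacle in the argument is justifying the use of the pseudoinverse and checking the orthogonality/normalisation conditions carefully; otherwise the calculation is routine first-order perturbation theory.
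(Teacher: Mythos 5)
Your proposal is correct and follows essentially the same route as the paper's Appendix A1 proof: first-order perturbation of the mean and covariance under $F_\epsilon$, differentiation of the eigenvalue equation, left-multiplication by $\hat{\bfu}^T$ for $\dot\lambda$, and inversion via the Moore--Penrose pseudoinverse together with $(\bfSigma-\lambda\bfI_p)^+\hat{\bfu}=\bfzero$ and the orthogonality $\dot{\hat{\bfu}}\perp\hat{\bfu}$ for the eigenvector. (Your closed form $(1-\epsilon)\bfSigma+\epsilon(1-\epsilon)(\bfz-\bfmu)(\bfz-\bfmu)^T$ is in fact the exact expression; it agrees with the paper's expansion to first order, which is all that matters here.)
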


The detailed calculations are presented in Appendix \ref{NonRob:PCA:proof}. The following result shows that outliers with unbounded influence function do exist.

\begin{corollary}
% {\bf The influence function is unbounded w.r.t. the norm of an outlier.} \\
Let $\bfz=\bfmu + \gamma \hat{\bfu} + \eta \bfv$ where $\bfv$ is orthogonal to $\hat{\bfu}$ and does not belong to the null space of $\bfSigma$ and $\gamma,\eta\neq 0$ then
\begin{equation*}
\lim _{\bfz: \, ||\bfz||_2 \rightarrow \infty}||IF_{\hat{\bfu}} (\bfz, F)||_2 = \infty,
\end{equation*}
and similarly for $IF_\lambda (\bfz, F)$.
\end{corollary}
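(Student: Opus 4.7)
The plan is to substitute the one-parameter family $\bfz - \bfmu = \gamma \hat{\bfu} + \eta \bfv$ directly into the two influence-function expressions supplied by the preceding Proposition, simplify using the spectral structure of $\bfSigma$, and then exhibit a sequence along which $\|\bfz\|_2 \to \infty$ and each influence function diverges.

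First I would compute the scalar $(\bfz-\bfmu)^T \hat{\bfu}$. Since $\hat{\bfu}$ has unit norm and $\bfv \perp \hat{\bfu}$ by assumption, this collapses to $\gamma$, so $IF_\lambda(\bfz, F) = \gamma^2 - \lambda$ is immediate. For the eigenvector IF, the key observation is that $\hat{\bfu}$ lies in the null space of $\bfSigma - \lambda \bfI_p$; since this matrix is symmetric, it shares its null space with its Moore--Penrose pseudoinverse, so $(\bfSigma - \lambda \bfI_p)^+ \hat{\bfu} = \bfzero$. This collapses the eigenvector influence function to $IF_{\hat{\bfu}}(\bfz, F) = -\gamma\eta\,(\bfSigma - \lambda \bfI_p)^+ \bfv$.

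Next I would verify that $(\bfSigma - \lambda \bfI_p)^+ \bfv \neq \bfzero$. The multiplicity-1 assumption on $\lambda$ means that the null space of $\bfSigma - \lambda \bfI_p$ is precisely the one-dimensional span of $\hat{\bfu}$. Since $\bfv \perp \hat{\bfu}$ and $\bfv \neq \bfzero$, the vector $\bfv$ lies in the orthogonal complement of this null space, which is the range on which the pseudoinverse acts injectively; the stated exclusion of the null space of $\bfSigma$ itself then rules out the only remaining degenerate configuration. Hence the image is nonzero and $\|IF_{\hat{\bfu}}(\bfz, F)\|_2 = |\gamma\eta| \cdot \|(\bfSigma - \lambda \bfI_p)^+ \bfv\|_2$ is a strictly positive multiple of $|\gamma\eta|$.

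Finally I would exhibit divergence by taking the scaling sequence $\bfz_t = \bfmu + t\gamma \hat{\bfu} + t\eta \bfv$ with $t \to \infty$. Then $\|\bfz_t\|_2 \geq t\sqrt{\gamma^2 + \eta^2\|\bfv\|_2^2} - \|\bfmu\|_2 \to \infty$, while $IF_\lambda(\bfz_t, F) = t^2 \gamma^2 - \lambda \to \infty$ and $\|IF_{\hat{\bfu}}(\bfz_t, F)\|_2 = t^2 |\gamma\eta| \cdot \|(\bfSigma - \lambda \bfI_p)^+ \bfv\|_2 \to \infty$. The only point requiring care is the nontriviality check on the pseudoinverse action; once the kernel of $\bfSigma - \lambda \bfI_p$ is pinned down via the multiplicity-1 hypothesis, the blow-up along the rescaling direction is routine.
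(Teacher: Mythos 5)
Your proposal is correct and follows essentially the same route as the paper: direct substitution of $\bfz-\bfmu=\gamma\hat{\bfu}+\eta\bfv$, collapse of the influence functions to $-\gamma\eta(\bfSigma-\lambda\bfI_p)^{+}\bfv$ and $\gamma^2-\lambda$, a check that $(\bfSigma-\lambda\bfI_p)^{+}\bfv\neq\bfzero$, and divergence along a rescaling of $(\gamma,\eta)$. If anything, your justification of the nonvanishing step is slightly more careful than the paper's, since you correctly pin $\ker\bigl((\bfSigma-\lambda\bfI_p)^{+}\bigr)=\mathrm{span}(\hat{\bfu})$ to the multiplicity-1 hypothesis rather than to the condition $\bfv\notin\ker\bfSigma$.
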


\begin{proof}
Direct substitution of $\bfz$ into the influence function gives:
\begin{equation*}
IF_{\hat{\bfu}} (\bfz, F) = -((\gamma \hat{\bfu} + \eta \bfv)^T \hat{\bfu}) (\bfSigma-\lambda \bfI_p)^{+} (\gamma \hat{\bfu} + \eta \bfv)
= - \gamma \eta (\bfSigma-\lambda \bfI_p)^{+} \bfv.
\end{equation*}
Since $\bfv$ does not belong to the null space of $\bfSigma$, it holds that $(\bfSigma-\lambda \bfI_p)^{+} \bfv \neq \bfzero$ thus $||(\bfSigma-\lambda \bfI_p)^{+} \bfv||_2=c\neq0$. Hence,
\begin{equation*}
||IF_{\hat{\bfu}} (\bfz, F)||_2 = |\gamma| |\eta| c.
\end{equation*}
Given that $||\bfz||_2^2 = \gamma^2+\eta^2+||\bfmu||_2^2+\gamma \bfmu^T\hat{\bfu}+\eta \bfmu^T\bfv$, either sending $|\gamma|\to\infty$ or $|\eta|\to\infty$ completes the proof.

Similarly,
\begin{equation*}
IF_\lambda (\bfz, F) = \gamma^2-\lambda \rightarrow \infty, 
\end{equation*}
as $|\gamma|\to\infty$.
\end{proof}

\subsection{Generalizations of standard PCA}
\label{Lik.Inter.PCA}
Standard PCA can be viewed as a special case of a more general optimization problem. We present two such generalization: the first one leads to projection pursuit algorithms while the second leads to a maximum likelihood formulation. Let $\bfu$ be a unit vector and define the projection values
\begin{equation*}
c_{i}(\bfu) = \bfx^{T}_{i} \bfu, {\hspace{3mm}} i=1, \ldots, n,
\end{equation*}
and a function $\Phi:\mathbb R^n \to \mathbb R$ acting on the projected values. The first generalization of PCA is defined as the maximization of $\Phi$:
\begin{equation*}
\bfu_1 = \argmax_{||\bfu||_2=1} \Phi(c_1(\bfu),...,c_n(\bfu)) \ .
\end{equation*}
As in the standard PCA, the following principal directions are obtained after removing the contribution of the current principal component from the data. When $\Phi$ is the sample variance then we recover the standard PCA.
% Shall we state other choices for $\Phi$?

The second generalization interprets the computation of the principal component as a maximum likelihood estimation problem. By letting, 
\begin{equation}\label{GausLogLik}
    l_{G}(\mu, \sigma^{2}| c_{1},\ldots, c_{n})= -\frac{n}{2} \log {\sigma}^{2} - \frac{n}{2{\sigma}^{2}}\sum_{i=1}^{n}(c_{i}-\mu)^{2}.
\end{equation}
be the Gaussian log-likelihood, the first principal direction can be obtained by solving the minimax problem:
\begin{equation*}
\min_{||\bfu||_2=1}\max_{\mu, \sigma^2} \ l_{G}(\mu, \sigma^{2}| c_{1}(\bfu),\ldots, c_{n}(\bfu)).
\end{equation*}
Indeed, the inner maximization can be solved analytically which leads to the optimal solution
\begin{equation*}
\hat{\mu}(\bfu) = \frac{1}{n} \sum_{i=1}^n c_i(\bfu) =: \bar{c}({\bf u})
\end{equation*}
and
\begin{equation*}
{\hat{\sigma}}^{2}({\bf u}) = \frac{1}{n} \sum_{i=1}^{n} (c_{i}({\bf u})- \bar{c}({\bf u}))^{2}.
\end{equation*}
Unsurprisingly, the optimal values are the sample mean and the sample variance. Using the above formulas it is straightforward to show that
\begin{eqnarray}
  \argmin_{||\bfu||_2=1} \ l_{G}\big(\hat{\mu}(\bfu), {\hat{\sigma}}^{2}({\bfu})| c_{1}({\bfu}), \ldots, c_{n}({\bfu})\big)
  = \argmax_{||\bfu||_2=1} \ \hat{\sigma}^{2}({\bfu}) \ .
\end{eqnarray}
Variations of PCA can be derived by changing the likelihood function and in the next section we analyze the case of Cauchy distribution.

\section{Cauchy PCA}
\label{CPCA}
The Cauchy log-likelihood function is given by
\begin{equation}\label{Cau.LogLik}
    l_{C}({\mu},{\sigma}| {c}_{1}({\bfu}), \ldots, {c}_{n}({\bfu}))=  n \log{\frac{\sigma}{\pi}} - \sum_{i=1}^{n} \log \left\{{\sigma}^{2}+ (c_{i}(\bfu)-{\mu})^{2}\right\}.
\end{equation}
where $\mu$ and $\sigma$ are the two parameters of the Cauchy distribution. The first Cauchy principal direction is also obtained by solving the minimax optimization problem:
\begin{equation}\label{cauchy:minimax}
\min_{||\bfu||_2=1}\max_{\mu,\sigma} \ l_{C}(\mu, \sigma^{2}| c_{1}(\bfu),\ldots, c_{n}(\bfu)).
\end{equation}
In contrast to the Gaussian case, the inner maximization cannot be performed analytically. Therefore an iterative approach needs to be utilized. Here, we apply the Newton-Raphson method with initial values the median and half the interquartile range for the location and scale parameters, respectively. According to \citet{Copas:1975}, although the mean of the Cauchy distribution does not exist and it has infinite variance, the Cauchy log-likelihood function $l_{C}(\mu, \sigma)$ has a unique maximum likelihood estimate, $(\hat{\mu},\hat{\sigma})$. 

Fixing $\mu$ and $\sigma$, the outer minimization is also non-analytic and a fixed point iteration is applied to calculate $\bfu$. The iteration is given by
\begin{equation}
\hat{\bfu} = \frac{\hat{\bfu}_{un}}{||\hat{\bfu}_{un}||_2},
\label{cauchy:norm:eq}
\end{equation}
where $\hat{\bfu}_{un}$ is the unnormalized direction which is obtained from the differentiation of the Lagrangian function with respect to $\bfu$ and it is given by
\begin{eqnarray} \label{parallel}
\hat{\bfu}_{un} = \sum_{i=1}^{n}\frac{({\bfx}_i^T{\hat{\bfu}}-\hat{\mu}){\bfx}_i} {\hat{\sigma}^2 + \left({\bfx}_i^T{\hat{\bfu}}-\hat{\mu} \right)^2} \ .
\label{fixed:point:eq}
\end{eqnarray}

Once the first principal direction has been computed, its contribution from the dataset $\bfX$ is removed and the same procedure to estimate the next principal direction is repeated. This iterative process is repeated $k$ times. The removal of the contribution makes the principal directions orthogonal to each other.
We summarize the estimation of $k$ Cauchy principal components in the following pseudo-code (Algorithm \ref{1CPC:Algo.}).

\begin{algorithm}[H]
\caption{Cauchy PCA}
\label{1CPC:Algo.}
\begin{algorithmic}
\FOR{$j=1,...,k$}
\STATE $\bullet$ Initialize ${\hat{\bfu}_{un}}$ and normalize
$\hat{\bfu}= \hat{\bfu}_{un} / ||\hat{\bfu}_{un}||_2$
\WHILE{not converged}
\STATE $\bullet$ Fix $\hat{\bfu}$ and set
$$ c_i(\hat{\bfu}) = \bfx_i^T\hat{\bfu}, \ \ i=1,...,n.$$
\STATE $\bullet$ Via Newton-Raphson algorithm find
$$(\hat{\mu},\hat{\sigma})=\argmax_{\mu, \sigma} \ l_C(\mu, \sigma; c_1(\hat{\bfu}), \ldots, c_n(\hat{\bfu})).$$ 
\STATE $\bullet$ Fix $(\hat{\mu}, \hat{\sigma})$ and using fixed point iteration (i.e., (\ref{fixed:point:eq}) \& (\ref{cauchy:norm:eq})) find
$$\hat{\bfu} = \argmin_{\bfu} \ l_C(\hat{\mu}, \hat{\sigma}| c_1(\bfu), \ldots, c_n(\bfu)) - \lambda (||\bfu||_2^2-1)$$
\ENDWHILE
\STATE $\bullet$ Set the $j$-th Cauchy principal direction
$$\bfu_{j} = \hat{\bfu}.$$
\STATE $\bullet$ Remove the contribution from the dataset
\begin{eqnarray*}
\bfX = \bfX (\bfI_p - \bfu_{j}\bfu^T_{j}),%\bfX \bfP_{\bfu_j},
\end{eqnarray*}
\ENDFOR
\end{algorithmic}
\end{algorithm}

\subsection{Robustness of the Leading Cauchy Principal Direction}
Let $\bftheta = \left(\mu,\sigma\right)^T$ be the parameter vector of the Cauchy distribution and consider the infinite-sample normalized Cauchy log-likelihood function
\begin{equation}
l(\bfu|\bftheta) = \int_{\bfx\in\mathbb R^p} g(c(\bfu),\bftheta)\, dF(\bfx),
\end{equation}
where $g(c,\bftheta) = \log(\sigma/\pi) - \log( \sigma^2+(c-\mu)^2)$ and $c(\bfu)=\bfx^T\bfu$. We will estimate the influence function for the leading Cauchy principal direction
\begin{equation}
\hat{\bfu} = \argmin_{||\bfu||_2=1} \ l(\bfu|\bftheta_F(\bfu)),
\end{equation}
where $\bftheta_F(\bfu)=\argmax_{\bftheta} l(\bfx^T\bfu|\bftheta)$ is the optimal Cauchy parameters for a given direction $\bfu$.

Since $\hat{\bfu}$ is restricted to be a unit vector, the standard condition for the minimum, i.e., $\left.\frac{\partial}{\partial\bfu} l(\bfu|\bftheta_F(\bfu))\right\vert_{\bfu=\hat{\bfu}} = \bfzero$ is not valid. The proper condition is defined by
\begin{equation}
\bfP_{\hat{\bfu}} \left.\frac{\partial}{\partial\bfu} l(\bfu|\bftheta_F(\bfu))\right\vert_{\bfu=\hat{\bfu}} = \bfzero ,
\end{equation}
where $\bfP_{\bfu}$ is the projection matrix given by $\bfP_{\bfu}=\bfI_p-\bfu\bfu^T$.

\begin{Remark}
An equivalent condition is to satisfy  $\bfh^T \left.\frac{\partial}{\partial\bfu} l(\bfu|\bftheta_F(\bfu))\right\vert_{\bfu=\hat{\bfu}} = \bfzero$ for all $\bfh$ such that $\bfh^T\hat{\bfu}=0$ and $||\bfh||_2=1$. Both derived conditions are essentially a consequence of the Lagrangian formulation of the constraint optimization problem. Indeed, the Lagrange condition implies that at the minimum the direction of the objective function's derivative should be parallel to the direction of the constraint's derivative which translates to $\left.\frac{\partial}{\partial\bfu} l(\bfu|\bftheta_F(\bfu))\right\vert_{\bfu=\hat{\bfu}} = \lambda \hat{\bfu}$ where $\lambda\neq 0$ is the Lagrange multiplier.
\end{Remark}

Let $\bar{g}(\bfx;\bfu) = \left.g(\bfx^T\bfu|\theta)\right\vert_{\theta=\theta_F(\bfu)}$ be the likelihood function computed at $\theta=\theta_F(\bfu)$ and let denote its partial derivatives as 
\[
\bar{g}_c(\bfx;\bfu) = \left.\frac{\partial}{\partial c} g(\bfx^T\bfu|\theta)\right\vert_{\theta=\theta_F(\bfu)}
\]
and  
\[
\bar{g}_{\bftheta}(\bfx;\bfu) = \left.\frac{\partial}{\partial \bftheta} g(\bfx^T\bfu|\theta)\right\vert_{\theta=\theta_F(\bfu)}. 
\]
Similarly, $\bar{g}_{cc}$, $\bar{g}_{c\theta}$ and $\bar{g}_{\theta\theta}$ denote the second order derivatives.
The following proposition establishes the expression for the influence function of the leading Cauchy principal direction, $\hat{\bfu}$. 

\begin{proposition}\label{influence:func:cauchy:pca}
Under the assumption of ${\bfI}_F(\hat{\bfu})$ and $\bfA$ being invertible matrices, the influence function of $\hat{\bfu}$ is
\begin{equation}
IF_{\hat{\bfu}} (\bfz, F) = \bfA^{-1} \bfb ,
\end{equation}
where
$$
\begin{aligned}
\bfA &= \bfI_p \int_{\mathbb R^p} \bar{g}_{c\bftheta}(\bfx;\hat{\bfu})  \bfx^T\hat{\bfu} dF(\bfx) 
- \bfP_{\hat{\bfu}} \int_{\mathbb R^p} \bar{g}_{cc}(\bfx;\hat{\bfu}) \bfx^T\bfx dF(\bfx) \bfP_{\hat{\bfu}} \\
&+ \bfP_{\hat{\bfu}} \int_{\mathbb R^p}  \bfx \bar{g}_{c\bftheta}(\bfx;\hat{\bfu}) dF(\bfx)  \, {\bfI}_F(\hat{\bfu})^{-1} \, 
\int_{\mathbb R^p}  \bar{g}_{\bftheta c}(\bfx;\hat{\bfu}) \bfx^T dF(\bfx) \bfP_{\hat{\bfu}}
\end{aligned}
$$
and
$$
\bfb = \bfb(z) = \bar{g}_c(\bfz, \hat{\bfu}) \bfz + \int_{\mathbb R^p} \bfx \bar{g}_{c\bftheta}(\bfx;\hat{\bfu}) dF(\bfx) \, 
{\bfI}_F(\hat{\bfu})^{-1} \,  \bar{g}_{\bftheta}(\bfz;\hat{\bfu}),
$$
while
$$
{\bfI}_F(\hat{\bfu}) = \int_{\mathbb R^p} \bar{g}_{\bftheta\bftheta}(\bfx;\hat{\bfu}) dF(\bfx)
$$
is the expected Fisher information matrix under $F$ for the parameters of the Cauchy distribution computed at $\hat{\bfu}$.
\end{proposition}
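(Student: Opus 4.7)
The plan is to compute $IF_{\hat{\bfu}}(\bfz,F)$ by implicit differentiation of the first-order optimality conditions at the contaminated distribution $F_\epsilon=(1-\epsilon)F+\epsilon\Delta_{\bfz}$. Let $\hat{\bfu}(\epsilon)$ and $\hat{\bftheta}(\epsilon)$ denote the minimax solution at $F_\epsilon$, and set $\dot{\bfu}=\frac{d}{d\epsilon}\hat{\bfu}(\epsilon)\big|_{\epsilon=0}=IF_{\hat{\bfu}}(\bfz,F)$ (the unknown to be computed) and $\dot{\bftheta}=\frac{d}{d\epsilon}\hat{\bftheta}(\epsilon)\big|_{\epsilon=0}$.

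The first step is to record the two coupled first-order conditions at $F_\epsilon$. The inner maximisation in $\bftheta$ yields the score equation $\int g_{\bftheta}(\bfx^T\hat{\bfu}(\epsilon)|\hat{\bftheta}(\epsilon))\,dF_\epsilon(\bfx)=\bfzero$, while the unit-norm constrained outer minimisation over $\bfu$ is conveniently written in Lagrangian form as
\[
\int g_c(\bfx^T\hat{\bfu}(\epsilon)|\hat{\bftheta}(\epsilon))\,\bfx\,dF_\epsilon(\bfx)=\lambda(\epsilon)\,\hat{\bfu}(\epsilon),
\]
with scalar multiplier $\lambda(\epsilon)$. Evaluated at $\epsilon=0$ these reduce to the conditions satisfied by $(\hat{\bfu},\hat{\bftheta})$, and inner-producting the second with $\hat{\bfu}$ identifies $\lambda=\int \bar{g}_c(\bfx;\hat{\bfu})\,\bfx^T\hat{\bfu}\,dF(\bfx)$, which will later reappear as the first term of $\bfA$.

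The second step is to differentiate both conditions in $\epsilon$ at $\epsilon=0$ using the chain rule together with $\frac{d}{d\epsilon}F_\epsilon=\Delta_{\bfz}-F$. Differentiating the $\bftheta$-score and cancelling $\int\bar{g}_{\bftheta}\,dF=\bfzero$ yields a linear equation coupling $\dot{\bfu}$ and $\dot{\bftheta}$ whose solution for $\dot{\bftheta}$ is
\[
\dot{\bftheta}=-\,\bfI_F(\hat{\bfu})^{-1}\Bigl[\int\bar{g}_{\bftheta c}(\bfx;\hat{\bfu})\,\bfx^T\,dF(\bfx)\,\dot{\bfu}+\bar{g}_{\bftheta}(\bfz;\hat{\bfu})\Bigr].
\]
Differentiating the Lagrangian identity and using $\int\bar{g}_c\,\bfx\,dF=\lambda\hat{\bfu}$ produces
\[
\Bigl[\int\bar{g}_{cc}\,\bfx\bfx^T\,dF\Bigr]\dot{\bfu}+\Bigl[\int\bfx\,\bar{g}_{c\bftheta}^{T}\,dF\Bigr]\dot{\bftheta}+\bar{g}_c(\bfz;\hat{\bfu})\,\bfz=\lambda\,\dot{\bfu}+\lambda'(0)\,\hat{\bfu}.
\]

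The final step is to pre-multiply the last identity by $\bfP_{\hat{\bfu}}=\bfI_p-\hat{\bfu}\hat{\bfu}^{T}$, which annihilates the unknown Lagrange-drift term $\lambda'(0)\hat{\bfu}$; the unit-norm constraint $||\hat{\bfu}(\epsilon)||_2=1$ additionally forces $\hat{\bfu}^{T}\dot{\bfu}=0$, so $\dot{\bfu}=\bfP_{\hat{\bfu}}\dot{\bfu}$, and a second copy of $\bfP_{\hat{\bfu}}$ can be inserted on the right of each integral acting on $\dot{\bfu}$. Substituting the expression for $\dot{\bftheta}$ and collecting terms produces the matrix equation $\bfA\,\dot{\bfu}=\bfb(\bfz)$, in which the $\lambda\bfI_p$ piece (with $\lambda$ written back as $\int\bar{g}_c\,\bfx^T\hat{\bfu}\,dF$) contributes the first term of $\bfA$, the $\int\bar{g}_{cc}\bfx\bfx^{T}\,dF$ integral contributes the second, the $\dot{\bftheta}$-substitution supplies the third (through $\bfI_F^{-1}$), and the $\bfz$-dependent contamination pieces assemble into $\bfb$; under the invertibility of $\bfA$ and $\bfI_F(\hat{\bfu})$, inverting gives $IF_{\hat{\bfu}}(\bfz,F)=\bfA^{-1}\bfb$. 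The main technical obstacle is the unit-norm constraint: one must carefully separate the Lagrange multiplier $\lambda$ from genuine curvature, decide on which side of each Hessian-like integral the projector $\bfP_{\hat{\bfu}}$ must be inserted, and correctly combine the direct dependence of the scores on $\hat{\bfu}(\epsilon)$ with the indirect dependence through $\hat{\bftheta}(\epsilon)$; beyond this bookkeeping no deeper analytic difficulty is expected.
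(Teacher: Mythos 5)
Your proposal is correct and follows essentially the same route as the paper's proof: implicit differentiation of the two coupled first-order conditions at $F_\epsilon$, the implicit function theorem (differentiating the $\bftheta$-score) to eliminate $\dot{\bftheta}$ through $\bfI_F(\hat{\bfu})^{-1}$, and the projector $\bfP_{\hat{\bfu}}$ to handle the unit-norm constraint, yielding the linear system $\bfA\,\dot{\bfu}=\bfb$. The only differences are cosmetic: you carry an explicit Lagrange multiplier $\lambda(\epsilon)$ and kill $\lambda'(0)\hat{\bfu}$ by projecting, whereas the paper states the stationarity condition directly in projected form and parametrizes the perturbed direction as $\cos(\rho)\hat{\bfu}_F+\sin(\rho)\bfh$ — equivalent devices, as the paper's own Remark notes; if anything, your explicit tracking of the contamination-induced drift $\bar{g}_{\bftheta}(\bfz;\hat{\bfu})$ in $\dot{\bftheta}$ makes the second term of $\bfb$ appear more transparently than in the appendix as written.
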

\begin{proof}
The proof consists of several straightforward series expansions and implicit function calculations. The complete proof is given in Appendix \ref{robust:cauchy:proof}.
\end{proof}

% \subsection{Robustness of the Cauchy Principal Direction ($\hat{\bfu}$)}
The following boundedness result for the influence function states the conditions under which Cauchy PCA is robust. 

\begin{corollary} \label{boundness} Let the assumptions of the proposition hold.
If $\bfz\not\perp\hat{\bfu}$ or if $\bfz\perp\hat{\bfu}=0$ but $\mu_F(\hat{\bfu})=0$ then the influence function for $\hat{\bfu}$ is bounded.
\end{corollary}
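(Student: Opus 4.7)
The plan is to leverage the closed-form $IF_{\hat{\bfu}}(\bfz,F)=\bfA^{-1}\bfb(\bfz)$ from Proposition \ref{influence:func:cauchy:pca}. Since $\bfA$ is determined by $F$ and $\hat{\bfu}$ alone and is invertible by assumption—it does not depend on $\bfz$ at all—establishing boundedness of $IF_{\hat{\bfu}}(\bfz,F)$ reduces to showing that $\bfb(\bfz)$ remains bounded in $\bfz$ under the stated conditions, after which the conclusion follows immediately by applying the fixed operator $\bfA^{-1}$.

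The first step is to compute the partial derivatives of $g(c,\mu,\sigma)=\log(\sigma/\pi)-\log(\sigma^2+(c-\mu)^2)$ explicitly. The score components
\[
g_\mu=\frac{2(c-\mu)}{\sigma^2+(c-\mu)^2},\qquad g_\sigma=\frac{1}{\sigma}-\frac{2\sigma}{\sigma^2+(c-\mu)^2}
\]
are uniformly bounded in $c\in\mathbb R$ because the numerators grow at most linearly while the denominators grow quadratically. Evaluating at $\bftheta=\bftheta_F(\hat{\bfu})$ and $c=\bfz^T\hat{\bfu}$ shows that $\bar{g}_{\bftheta}(\bfz;\hat{\bfu})$ is uniformly bounded in $\bfz\in\mathbb R^p$, so the second summand of $\bfb(\bfz)$—this bounded two-vector premultiplied by a $\bfz$-independent matrix—is bounded under both cases of the corollary.

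The main step is to control the first summand $\bar{g}_c(\bfz;\hat{\bfu})\bfz=-\frac{2(c-\mu)}{\sigma^2+(c-\mu)^2}\bfz$, with $c=\bfz^T\hat{\bfu}$, $\mu=\mu_F(\hat{\bfu})$ and $\sigma=\sigma_F(\hat{\bfu})$. I would decompose $\bfz=c\hat{\bfu}+{\bf w}$ with ${\bf w}=\bfP_{\hat{\bfu}}\bfz\perp\hat{\bfu}$. The $\hat{\bfu}$-component is the scalar $-\frac{2c(c-\mu)}{\sigma^2+(c-\mu)^2}$, a continuous function of $c$ that tends to $-2$ as $|c|\to\infty$ and is therefore bounded on all of $\mathbb R$. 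Under the hypothesis $\bfz\not\perp\hat{\bfu}$—interpreted as $|c|$ bounded below by a positive multiple of $\|\bfz\|_2$ along the sequence of outliers—we have $\|{\bf w}\|_2\le\|\bfz\|_2=O(|c|)$, so the perpendicular contribution satisfies
\[
\Big\|\frac{2(c-\mu)}{\sigma^2+(c-\mu)^2}{\bf w}\Big\|_2\le\frac{2|c-\mu|}{\sigma^2+(c-\mu)^2}\,\|{\bf w}\|_2=O(1),
\]
thanks to the $1/|c|$ decay of the coefficient. In the complementary case $\bfz\perp\hat{\bfu}$ paired with $\mu_F(\hat{\bfu})=0$, we have $c-\mu=0$, whence $\bar{g}_c(\bfz;\hat{\bfu})\equiv 0$ and the first summand vanishes outright.

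The main obstacle is conceptual rather than computational: pinning down what \emph{boundedness} means on the set of admissible outliers, since a fixed $\bfz$ trivially yields a finite value and the physically meaningful statement is about $\|\bfz\|_2\to\infty$ along directions permitted by the corollary's hypothesis. The bookkeeping hinges on the heavy-tailed Cauchy log-likelihood producing the decay $\bar{g}_c=O(1/|c|)$, which exactly cancels the linear growth $\|\bfz\|_2=O(|c|)$ in the product $\bar{g}_c(\bfz;\hat{\bfu})\bfz$. This cancellation is precisely the mechanism that robustifies the leading principal direction, in sharp contrast to the Gaussian case of the preceding corollary where the score grows linearly and no such damping is available.
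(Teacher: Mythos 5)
Your proposal is correct and follows essentially the same route as the paper: reduce to boundedness of $\bfb(\bfz)$ because $\bfA$ is $\bfz$-free and invertible, compute the Cauchy score derivatives, and exploit the $O(1/|c|)$ decay of $\bar{g}_c$ against the $O(|c|)$ growth of $\|\bfz\|_2$, with the orthogonal case killed by $\mu_F(\hat{\bfu})=0$. The only difference is bookkeeping: the paper formalizes ``$\|\bfz\|_2\to\infty$'' by scaling along a fixed ray $\alpha\bfz$ with $\alpha\to\infty$, whereas you decompose $\bfz=c\hat{\bfu}+{\bf w}$ and assume the angle stays bounded away from $90^{\circ}$ --- marginally more general, but the same cancellation mechanism.
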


\begin{proof}
First, observe that matrix $\bfA$ does not depend on $\bfz$. It is only $\bfb$ that depends on $\bfz$ and our goal is to prove that $\bfb$ is bounded with respect to $\bfz$. Second, we have to compute the partial derivatives $\bar{g}_c(\bfz; \hat{\bfu})$ and $\bar{g}_{\bftheta}(\bfz; \hat{\bfu})$. Straightforward calculations lead to
$$
\bar{g}_c(\bfz; \hat{\bfu}) = - \frac{2(\bfz^T\hat{\bfu}-\mu_F(\hat{\bfu}))}{\sigma_F^2(\hat{\bfu})+(\bfz^T\hat{\bfu}-\mu_F(\hat{\bfu}))^2}
$$
$$
\bar{g}_\mu(\bfz; \hat{\bfu}) = \frac{2(\bfz^T\hat{\bfu}-\mu_F(\hat{\bfu}))}{\sigma_F^2(\hat{\bfu})+(\bfz^T\hat{\bfu}-\mu_F(\hat{\bfu}))^2}
$$
and
$$
\bar{g}_\sigma(\bfz; \hat{\bfu}) = \frac{1}{\sigma_F(\hat{\bfu})} - \frac{2\sigma_F(\hat{\bfu})}{\sigma_F^2(\hat{\bfu})+(\bfz^T\hat{\bfu}-\mu_F(\hat{\bfu}))^2}.
$$

Let us now define an arbitrary scaling of the outlier $\bfz\rightarrow\alpha\bfz$ and prove boundedness of $\bfb$ as we send $\alpha\to\infty$. We consider the first case where $\bfz\not\perp\hat{\bfu}$. It holds that $\lim_{\alpha\to\infty} \bar{g}_c(\alpha\bfz; \hat{\bfu})\alpha\bfz = -(\bfz^T\hat{\bfu})^{-1}\bfz$,  $\lim_{\alpha\to\infty} \bar{g}_\mu(\alpha\bfz; \hat{\bfu}) = 0$ and $\lim_{\alpha\to\infty} \bar{g}_\sigma(\alpha\bfz; \hat{\bfu}) = \frac{1}{\sigma_F(\hat{\bfu})}$ therefore $\bfb$ is bounded with respect to $\alpha$.

For the second case, we have 
\[
\lim_{\alpha\to\infty} \bar{g}_c(\alpha\bfz; \hat{\bfu})\alpha\bfz = \lim_{\alpha\to\infty} \frac{2\mu_F(\hat{\bfu})}{\sigma_F^2(\hat{\bfu})+\mu_F(\hat{\bfu})^2} \alpha\bfz = 0,
\]
\[
\lim_{\alpha\to\infty} \bar{g}_\mu(\alpha\bfz; \hat{\bfu}) = \frac{2\mu_F(\hat{\bfu})}{\sigma_F^2(\hat{\bfu})+\mu_F(\hat{\bfu})^2} = 0
\]
and
\[
\lim_{\alpha\to\infty} \bar{g}_\sigma(\alpha\bfz; \hat{\bfu}) = 
\frac{1}{\sigma_F(\hat{\bfu})} - \frac{2\sigma_F(\hat{\bfu})}{\sigma_F^2(\hat{\bfu})+\mu_F(\hat{\bfu})^2}
= -\frac{1}{\sigma_F(\hat{\bfu})}
\]
since $\mu_F(\hat{\bfu})=0$
by assumption. Thus $\bfb$ is bounded with respect to $\alpha$ for the second case, too.
\end{proof}

The only case not covered by the corollary is when $\bfz^T\hat{\bfu}=0$ and $\mu(\hat{\bfu})\neq 0$. Our experiments presented in the following section show that outliers that are orthogonal to the Cauchy principal direction do sometimes influence the estimation of the Cauchy principal direction yet not significantly.

\subsection{Several Cauchy principal components}
 We briefly mention possibilities for estimating several Cauchy principal components.  There are two obvious approaches: one approach, the sequential approach, is to repeat the algorithm described above on the subspace orthogonal to $\hat{\bfu}=\hat{\bfu}_1$ to obtain $\hat{\bfu}_2$, the second Cauchy principal component, where $\hat{\bfu}_1$ is the first Cauchy principal component; then  repeat the procedure on the subspace orthogonal to $\hat{\bfu}_1$ and $\hat{\bfu}_2$ to obtain $\hat{\bfu}_3$; and so on.  A second approach, the simultaneous approach, is to decide in advance how many principal components we wish to determine, $p$ say, and then use a $p$-dimensional multivariate Cauchy likelihood, which has $p+ p(p+1)/2$ free parameters, to obtain $\hat{\bfu}_1, \ldots , \hat{\bfu}_p$.  It turns out that these two approaches lead to equivalent results in classical (Gaussian) PCA but when a Cauchy likelihood is used the two approaches produce different sets of principal components.  Our current thinking is this: the sequential approach is easier to implement (essentially the same software can be used at each step) and it is faster.  However, the simultaneous approach could potentially be preferable if we know in advance how many principal components we wish to estimate. Further investigation is required.

% Experiments where we remove the "random" (due to finite sampling) term and assess only the "systematic" (due to the outliers) error.
% Can we understand why are we better in the cases we are? And, the opposite.

\section{Numerical Results}
\label{Comp.Algo}

\subsection{Simulation studies}
In this section we will empirically validate our proposed methodology, via simulation studies. We searched for R packages that offer robust PCA in the $n<<p$ case and came up with \textit{FastHCS} \citep{fasthcs2018}, \textit{rrcovHD} \citep{rrcovhd2016}, \textit{rpca} \citep{rpca2017} and \textit{pcaPP} \citep{pcapp2018}. Out of them, \textit{pcaPP} (Projection Pursuit PCA) is the only one which does not require hyper-parameter tuning, e.g. selection of the LASSO penalty $\lambda$ or choice of the percentage of observations used to estimate a robust covariance matrix. 

\subsubsection{Setup of the simulations}
Initially,  we created a $p \times p$ (orthonormal) basis $\bf B$ by using QR decomposition on some randomly generated data. We then generated eigenvalues $\lambda_i \sim Exp(0.4)$, where $i=1,\ldots,p$ and hence we obtained the covariance matrix $\pmb{\Sigma} = {\bf B}\pmb{\Lambda}{\bf B}^T$, where $\pmb{\Lambda} =\text{diag}(\lambda_i)$. The first column of $\bf B$ served as the first ``clean'' eigenvector, and was the benchmark in our comparative evaluations. Following this step, we simulated $n$ random vectors ${\bf X} \sim N_p\left({\bf 0}, \pmb{\Sigma} \right)$ and in order to check the robustness of the results to the center of the data, all observations were shifted right by adding $50$ everywhere. A number of outliers equal to 2$\%$ of the sample size were introduced. These outliers were $\bar{\bf x}+e^{\kappa}{\bf z} \in {\mathbb{R}}^{p}$, where $\bar{\bf x}$ is the sample mean vector, ${\bf z}$ are unit vector(s) and $e^{\kappa}$ a real number denoting their norm, where $\kappa$ varied from $3$ up to $8$ increasing with a step size equal to $1$ and the angle between the outliers ${\bf z}$ and the first ``clean'' eigenvector spanned from $0^{\circ}$ up to $90^{\circ}$. In all cases, we subtracted the spatial median or the column-wise median\footnote{The results are pretty similar for either type of median and we here show the results of he column-wise median.} and scaled them by the mean absolute deviation.

At each case, we computed the first Cauchy-PCA eigenvector and the first PP-PCA eigenvector. The performance metric is the angle (in degrees) between the first robust (based on Cauchy or PP-PCA) eigenvector and the first "clean" eigenvector computed using the classical PCA. All experiments were repeated $100$ times and the results were averaged. 

\subsubsection{Comparative results}

Tables \ref{tab100_500}-\ref{tab500_1000} present the performance of the first Cauchy-PCA eigenvector and of the first PP-PCA eigenvector for a variety of norms of the outlier, with different angles ($\phi$) between the outlier and the leading true eigenvector, for the $n<p$ case. 

The case of $n<p$ was selected as statistical inference in this case is more challenging than the $p<n$  case\footnote{In this paper we focus on high-dimensional simulations and real-date examples ($p>n)$  but in results not presented in the paper we found that Cauchy PCA is also very competitive and performs strongly in low dimensional settings ($p<n$).}. Additionally, this case is also ordinarily met in the field of bioinformatics were the -omics data count tens of thousands of variables (genes, single nucleotide polymorphisms, etc.) but only tens or at most hundreds of observations. 

As observed in Tables \ref{tab100_500}-\ref{tab500_1000}, the average angular difference between the Cauchy and the PP PCA ranges from $20^{\circ}$ up to more than $50^{\circ}$, which is evidently quite substantial, providing evidence that Cauchy PCA has performed in a superior manner to the projection pursuit method of Croux et al. (2007, 2013). In particular, the tables demonstrate that Cauchy PCA is less error prone than its competitor but, as is seen in Table \ref{tab500_1000}, the error decreases for both methods with increasing sample size. Further, the mean angular difference between the two methods increases as the angle $\phi$ increases. For instance, in Table \ref{tab100_500}, when $k=8$ and $\phi=0^{\circ}$ the difference between the two methods is $20^{\circ}$, whereas when $\phi=90^{\circ}$ the difference increases to $48^{\circ}$. Further, the error is not highly affected by the angle $\phi$, or the norm of the outliers. It can be seen  that in Table \ref{tab100_1000} and Table \ref{tab500_1000} in the special case of $\phi=90^{\circ}$, the error increases for the Cauchy PCA by $2^{\circ}-3^{\circ}$, thus corroborating the result of Corollary \ref{boundness}. However, this effect, as in Table \ref{tab100_500}, is rather small, though noticeable. 

\begin{table}
\caption{Mean angular difference between the robust eigenvectors computed in the contaminated data and the sample eigenvector computed in the clean data when $n=100$ and $p=500$. The norm of the outliers is $e^{k}$ and their angle with the true clean eigenvector is denoted by $\phi$.}
\label{tab100_500}
\begin{tabular}{ll|rrrrrrr}
\hline
Angle  &  Method  & k=-Inf & k=3 & k=4 & k=5 & k=6 & k=7 & k=8 \\ \hline
$\phi=0^{\circ}$  & Cauchy & 31.17 & 29.79 & 29.54 & 28.83 & 28.86 & 29.24 & 28.78 \\
                  & PP     & 82.45 & 49.91 & 48.84 & 48.22 & 49.08 & 49.61 & 48.14 \\ \hline
$\phi=30^{\circ}$ & Cauchy & 31.44 & 29.24 & 29.13 & 28.60 & 28.89 & 29.34 & 29.65 \\
                  & PP     & 82.45 & 65.28 & 65.34 & 63.42 & 62.96 & 66.63 & 65.43 \\ \hline
$\phi=60^{\circ}$ & Cauchy & 31.49 & 29.86 & 29.07 & 29.04 & 29.55 & 29.70 & 29.09 \\ 
                  & PP     & 82.11 & 81.11 & 82.55 & 82.63 & 82.12 & 82.49 & 82.03 \\ \hline
$\phi=90^{\circ}$ & Cauchy & 32.32 & 31.67 & 33.00 & 33.13 & 32.86 & 33.19 & 33.06 \\ 
                  & PP     & 82.38 & 82.06 & 81.69 & 82.12 & 81.73 & 81.74 & 81.88 \\ \hline 
\end{tabular}
\end{table}

\begin{table}
\caption{Mean angular difference between the robust eigenvectors computed in the contaminated data and the sample eigenvector computed in the clean data when $n=100$ and $p=1000$. The norm of the outliers is $e^{k}$ and their angle with the true clean eigenvector is denoted by $\phi$.}
\label{tab100_1000}
\begin{tabular}{ll|rrrrrrr}
\hline
Angle  &  Method  & k=-Inf & k=3 & k=4 & k=5 & k=6 & k=7 & k=8 \\ \hline
$\phi=0^{\circ}$  & Cauchy & 36.53 & 33.12 & 33.60 & 33.69 & 32.62 & 32.51 & 33.16 \\
                  & PP     & 83.06 & 80.36 & 80.17 & 81.87 & 80.50 & 80.76 & 80.16 \\ \hline
$\phi=30^{\circ}$ & Cauchy & 36.55 & 34.72 & 33.91 & 33.09 & 33.11 & 33.16 & 32.79 \\ 
                  & PP     & 83.07 & 82.36 & 82.76 & 82.65 & 83.07 & 82.93 & 83.12 \\ \hline
$\phi=60^{\circ}$ & Cauchy & 36.42 & 34.46 & 33.96 & 33.61 & 34.41 & 33.07 & 33.47 \\
                  & PP     & 83.78 & 82.86 & 82.71 & 84.05 & 83.46 & 82.71 & 82.78 \\ \hline
$\phi=90^{\circ}$ & Cauchy & 36.50 & 36.12 & 36.81 & 37.18 & 39.34 & 39.11 & 38.51 \\
                  & PP     & 83.63 & 83.73 & 83.69 & 83.65 & 84.03 & 83.66 & 83.00 \\ \hline
\end{tabular}
\end{table}

\begin{table}
\caption{Mean angular difference between the robust eigenvectors computed in the contaminated data and the sample eigenvector computed in the clean data when $n=500$ and $p=1000$. The norm of the outliers is $e^{k}$ and their angle with the true clean eigenvector is denoted by $\phi$.}
\label{tab500_1000}
\begin{tabular}{ll|rrrrrrr}
\hline
Angle  &  Method  & k=-Inf & k=3 & k=4 & k=5 & k=6 & k=7 & k=8 \\ \hline
$\phi=0^{\circ}$  & Cauchy & 19.95 & 18.60 & 18.46 & 18.35 & 18.24 & 18.20 & 17.93 \\
                  & PP     & 68.76 & 26.08 & 24.93 & 24.91 & 24.83 & 24.73 & 24.72 \\ \hline 
$\phi=30^{\circ}$ & Cauchy & 19.43 & 18.30 & 18.39 & 18.22 & 18.16 & 18.01 & 18.13 \\ 
                  & PP     & 68.98 & 39.72 & 38.88 & 38.44 & 38.20 & 38.15 & 38.14 \\ \hline
$\phi=60^{\circ}$ & Cauchy & 19.76 & 18.60 & 18.12 & 18.20 & 18.40 & 18.19 & 18.01 \\
                  & PP     & 69.10 & 64.10 & 63.12 & 62.89 & 62.91 & 62.82 & 62.77 \\ \hline
$\phi=90^{\circ}$ & Cauchy & 19.49 & 19.84 & 20.16 & 21.87 & 22.41 & 22.87 & 22.84 \\
                  & PP     & 68.99 & 68.62 & 68.59 & 68.70 & 68.45 & 68.73 & 68.43 \\ \hline
\end{tabular}
\end{table}

\subsection{High dimensional real datasets}
Two real gene expression datasets, GSE13159 and GSE31161\footnote{From a biological standpoint, the data have already been uniformly pre-processed, curated and automatically annotated.}, downloaded from the \href{dataome.mensxmachina.org}{Biodataome} platform \citep{lakiotaki2018}, were used in the experiments. The dimensions of the datasets were equal to $2,096 \times 54,630$ and $1035 \times 54,675$, respectively. We randomly selected $5,000$ variables and computed the outliers using the high dimensional Minimum Covariance Determinant (MCD) of \cite{ro2015}. In accordance with the simulations studies, we removed the $2\%$ of the most extreme outliers detected by MCD and computed the first classical PC (benchmark eigenvector), the first Cauchy-PCA eigenvector and the first PP-PCA eigenvector of the "clean" data. We then added those outliers and increased their norm by $e^k$, where $k=(0, 3, 4, \ldots, 8)$ and computed computed the first Cauchy-PCA eigenvector and the first PP-PCA eigenvector. In all cases, we subtracted the spatial median or the column-wise median and scaled them by the mean absolute deviation. The performance metric is the angle (in degrees) between the first robust (based on Cauchy or PP-PCA) eigenvector and the first true ``clean" eigenvectors and the time required by each method. This procedure was repeated  $200$ times and the average results are graphically displayed in Figures \ref{gse}(a)-(d). 

Broadly speaking the effect of the PP PCA does not seem to have been affected substantially by the centering method, i.e. subtraction of the spatial or the column-wise median. On the contrary, the Cauchy PCA is affected by the type of median employed to this end. Centering with the spatial median yields high error levels for all norms of the outliers, for both datasets, whereas centering with the column-wise median produces much lower error levels. On average, the difference in the error between Cauchy PCA and PP PCA is about $30^{\circ}$ for the GSE31159 dataset (Figure \ref{gse}(a)) and about $14^{\circ}$ for the GSE3161 dataset (Figure \ref{gse}(b)). However, the error of the Cauchy PCA increases and the stabilizes in the GSE31159 dataset whereas the error of the PP PCA is stable regardless of the norm of the outliers. A different conclusion is extracted in the GSE31161 where the error of either method decreases as the norm of the outliers increases, until it reaches a plateau. 

With regards to computational efficiency, the PP PCA is not affected by either centering method, whereas Cauchy PCA seems to be affected in the GSE31159 dataset but not in the GSE31161 dataset as seen in Figures \ref{gse}(c) and \ref{gse}(d). Cauchy PCA centered with the column-wise median is, on average, 5 times faster than PP PCA.

\begin{figure}[!ht]
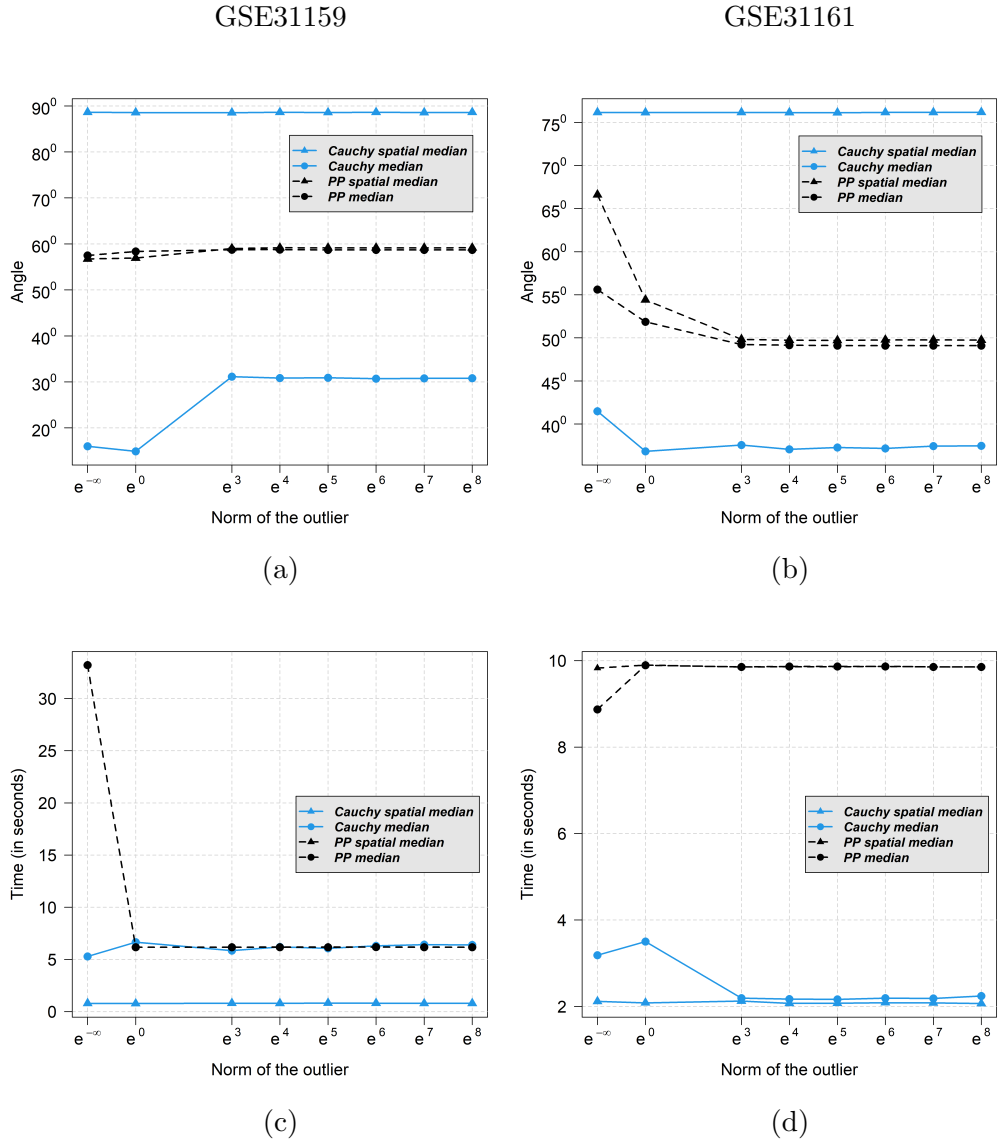

\centering
\begin{tabular}{cc}
GSE31159  &  GSE31161    \\
\includegraphics[scale = 0.4, trim = 30 0 0 0]{gse13159.png} &
\includegraphics[scale = 0.4, trim = 30 0 0 0]{gse31161.png} \\
(a)    &  (b)      \\ 
\includegraphics[scale = 0.4, trim = 30 0 0 0]{gse13159_time.png} &
\includegraphics[scale = 0.4, trim = 30 0 0 0]{gse31161_time.png} \\
(c)    &  (d)  
\end{tabular}
\caption{The first row presents the angle between the first Cauchy PC of the "contaminated" data and the 1st leading eigenvector of the "clean" data and the angle between the first Projection Pursuit PC of the "contaminated" data and the 1st leading eigenvector of the "clean" data for increasing norms of the outliers. The second row contains the time in seconds.}
\label{gse}
\end{figure}

\section{Conclusion}\label{concl.}
The starting point for this paper is the observation that classical PCA can be formulated purely in terms of operations on a Gaussian likelihood. Although this observation is not new, the specifics of this formulation of classical PCA do not appear to be as widely known as might be expected. The novel idea underlying this paper is to formulate a version of PCA in which a Cauchy likelihood is used instead of a Gaussian likelihood, leading to what we call Cauchy PCA. Study of the resulting influence functions shows that Cauchy PCA has very good robustness properties. Moreover, we have provided an implementation of Cauchy PCA which runs quickly and reliably. Numerous simulation and real-data  examples, mainly in high-dimensional settings, show that Cauchy PCA typically out-performs alternative robust versions of PCA whose implementation is in the public domain. 

\clearpage
\section*{Appendix}
\setcounter{section}{0}
\renewcommand{\thesubsection}{A\arabic{subsection}}

\subsection{Proof of Proposition 2.1}\label{NonRob:PCA:proof}
\begin{proof}
The perturbed distribution $(1-\epsilon)F(\bfx) + \epsilon\Delta_\bfz(\bfx)$ has perturbed mean value
\begin{equation*}
\bfmu_\epsilon = \bfmu + \epsilon (\bfz-\bfmu)
\end{equation*}
and perturbed covariance matrix
\begin{equation*}
\bfSigma_\epsilon = \bfSigma + \epsilon ((\bfz-\bfmu)(\bfz-\bfmu)^T-\bfSigma) + \epsilon^2 (\bfz-\bfmu)(\bfz-\bfmu)^T
\end{equation*}

Denoting by $\lambda_\epsilon$ the leading eigenvalue of $\bfSigma_\epsilon$ and by $\bfu_\epsilon$ the corresponding eigenvector, it holds that
\begin{equation}\label{perturbed:eigen:eq}
\bfSigma_\epsilon\bfu_\epsilon = \lambda_\epsilon \bfu_\epsilon \ \ \text{and} \ \  \bfu_\epsilon^{T}\bfu_\epsilon = 1 \ .
\end{equation}
Next, we expand the perturbed eigenvector and eigenvalue around the unperturbed ones as follows:
\begin{equation*}
\bfu_\epsilon = \bfu_0 + \epsilon \bfu_1 + O(\epsilon^2)
\end{equation*}
and
\begin{equation*}
\lambda_\epsilon = \lambda_0 + \epsilon \lambda_1 + O(\epsilon^2)
\end{equation*}
with
\begin{equation*}
\bfSigma\bfu_0 = \lambda_0  \ \ \text{and} \ \  \bfu_0^{T}\bfu_0 = 1 \ .
\end{equation*}

Substituting the formulas into (\ref{perturbed:eigen:eq}), and equating the zero-th and first order we get
\begin{equation*}
\bfSigma\bfu_0 = \lambda_0 \bfu_0 \ \ \text{and} \ \  \bfu_0^{T}\bfu_0 = 1 \ .
\end{equation*}
and
\begin{equation}\label{1st:order:eq}
((\bfz-\bfmu)(\bfz-\bfmu)^T-\bfSigma)\bfu_0 + \bfSigma\bfu_1 = \lambda_0\bfu_1 + \lambda_1\bfu_0
\end{equation}
and
\begin{equation*}
\bfu_0^{T}\bfu_1 = 0 \ .
\end{equation*}

Multiplying (\ref{1st:order:eq}) from the left with $\bfu_0^T$, we get
\begin{equation*}
\lambda_1 = \bfu_0^T ((\bfz-\bfmu)(\bfz-\bfmu)^T-\bfSigma)\bfu_0 + \bfu_0^T\bfSigma\bfu_1
= (\bfu_0^T (\bfz-\bfmu))^2 - \lambda_0
\end{equation*}

For $\bfu_1$, we rearrange (\ref{1st:order:eq}) to
\begin{equation*}
(\bfSigma-\lambda_0\bfI)\bfu_1 = \lambda_1\bfu_0 - ((\bfz-\bfmu)(\bfz-\bfmu)^T-\bfSigma)\bfu_0
\end{equation*}
and then multiply from the left with the pseudo-inverse of $\bfSigma-\lambda_0\bfI$ to obtain
\begin{equation*}
(\bfSigma-\lambda_0\bfI)^+(\bfSigma-\lambda_0\bfI)\bfu_1 = 
\lambda_1(\bfSigma-\lambda_0\bfI)^+\bfu_0 - (\bfSigma-\lambda_0\bfI)^+((\bfz-\bfmu)(\bfz-\bfmu)^T-\bfSigma)\bfu_0
\end{equation*}
Using the properties (\cite{Mardia&Kent&Bibby:1979}):  $(\bfSigma-\lambda_0\bfI)^+(\bfSigma-\lambda_0\bfI) = \bfI - \bfu_0\bfu_0^T$ and $(\bfSigma-\lambda_0\bfI)^+\bfu_0 = \bfzero$, we obtain
\begin{equation*}
\begin{aligned}
&\bfu_1 - \bfu_0\bfu_0^T\bfu_1 = (\bfSigma-\lambda_0\bfI)^+(\bfz-\bfmu)(\bfz-\bfmu)^T\bfu_0 - (\bfSigma-\lambda_0\bfI)^+ \lambda_0 \bfu_0 \\
\Rightarrow & \bfu_1 = ((\bfz-\bfmu)^T\bfu_0)(\bfSigma-\lambda_0\bfI)^+(\bfz-\bfmu)
\end{aligned}
\end{equation*}
and the proof is completed.
\end{proof}

\subsection{Proof of Proposition \ref{influence:func:cauchy:pca}}
\label{robust:cauchy:proof}
Let us first make the symbolism more explicit and denote $l_F(\bfu|\bftheta)$ the Cauchy log-likelihood function with respect to the distribution $F$ and $\hat{\bfu}_F$ the respective leading Cauchy principal direction. Then, our goal is to calculate the limit of
$$
\frac{1}{\epsilon} (\hat{\bfu}_{F_{\epsilon,\bfz}} - \hat{\bfu}_F)
$$
as $\epsilon\to 0$ where $\hat{\bfu}_{F_{\epsilon,\bfz}}$ is the leading Cauchy principal direction for the distribution $F_{\epsilon,\bfz}=(1-\epsilon)F+\epsilon \Delta_\bfz$. The optimality condition for the leading Cauchy principal direction reads
\begin{equation}
\bfP_{\hat{\bfu}_{F_{\epsilon,\bfz}}} \left. \frac{\partial}{\partial\bfu} l_{F_{\epsilon,\bfz}}\big(\bfu|\bftheta_{F_{\epsilon,\bfz}}(\bfu)\big) \right|_{\bfu=\hat{\bfu}_{F_{\epsilon,\bfz}}} = 0
\label{opt:cond:perturbed}
\end{equation}
and
$$
\bfP_{\hat{\bfu}_{F}} \left. \frac{\partial}{\partial\bfu} l_{F}\big(\bfu|\bftheta_{F}(\bfu)\big) \right|_{\bfu=\hat{\bfu}_{F}} = 0
$$
Moreover, $\hat{\bfu}_{F_{\epsilon,\bfz}}$ is a unit vector which can be represented as
$$
\hat{\bfu}_{F_{\epsilon,\bfz}} = \cos(\rho) \hat{\bfu}_F + \sin(\rho) \bfh
$$
where $\bfh$ is a unit vector perpendicular to $\hat{\bfu}_F$ and $\rho$ is a (small) real number. Under these assumptions, $\hat{\bfu}_{F_{\epsilon,\bfz}}$ is a unit vector since
$$
||\hat{\bfu}_{F_{\epsilon,\bfz}}||_2^2 = \cos^2(\rho) ||\hat{\bfu}_F||_2^2 + \sin^2(\rho) ||\bfh||_2^2 = 1
$$
Obviously, $\rho$ depends on $\epsilon$ and $\bfz$ (i.e., $\rho=\rho(\epsilon,\bfz)$) and $\lim_{\epsilon\to 0} \rho = 0$ but we choose to avoid denoting their explicit relationship because it is not required in our proof. Moreover, a Taylor expansion for the representation leads to
$$
\hat{\bfu}_{F_{\epsilon,\bfz}} = \hat{\bfu}_F + \rho \bfh + O(\rho^2)
$$
thus we obtain that
$$
\bfP_{\hat{\bfu}_{F_{\epsilon,\bfz}}} = \bfP_{\hat{\bfu}_{F}} - \rho (\hat{\bfu}_{F} \bfh^T + \bfh \hat{\bfu}_{F}^T) + O(\rho^2)
$$

Next, we compute the partial derivative using the chain rule
$$
\frac{\partial}{\partial\bfu} l_{F}\big(\bfu|\bftheta_{F}(\bfu)\big) =
\int_{\mathbb R^p} \left[ \frac{\partial}{\partial c} g(c(\bfu), \bftheta_{F}(\bfu)) \frac{\partial}{\partial \bfu} c(\bfu) 
+ \frac{\partial}{\partial \bftheta} g(c(\bfu), \bftheta_{F}(\bfu)) \frac{\partial}{\partial \bfu} \bftheta_{F}(\bfu) \right] dF(\bfx)
$$
Therefore,
\begin{equation*}
\begin{aligned}
\left. \frac{\partial}{\partial\bfu} l_{F}\big(\bfu|\bftheta_{F}(\bfu)\big) \right|_{\bfu=\hat{\bfu}_{F}} &= 
\int_{\mathbb R^p} \left[ \bar{g}_c(\bfx;\hat{\bfu}_{F}) \bfx 
+ \bar{g}_{\bftheta}(\bfx;\hat{\bfu}_{F})
\frac{\partial}{\partial \bfu} \bftheta_{F}(\bfu) \Big|_{\bfu=\hat{\bfu}_{F}} \right] dF(\bfx) \\
&= \int_{\mathbb R^p} \bar{g}_c(\bfx;\hat{\bfu}_{F}) \bfx  dF(\bfx)
+ \int_{\mathbb R^p} \bar{g}_{\bftheta}(\bfx;\hat{\bfu}_{F}) dF(\bfx)
\frac{\partial}{\partial \bfu} \bftheta_{F}(\bfu) \Big|_{\bfu=\hat{\bfu}_{F}} \\
&= \int_{\mathbb R^p} \bar{g}_c(\bfx;\hat{\bfu}_{F}) \bfx  dF(\bfx) 
\end{aligned}
\end{equation*}
The second summand equals to zero because $\hat{\bfu}_{F}$ maximizes the Cauchy log-likelihood function thus it holds that $\int_{\mathbb R^p} \bar{g}_{\bftheta}(\bfx;\hat{\bfu}_{F}) dF(\bfx) = \bfzero$. Similarly,
\begin{equation*}
\begin{aligned}
&\left. \frac{\partial}{\partial\bfu} l_{F_{\epsilon,\bfz}}\big(\bfu|\bftheta_{F_{\epsilon,\bfz}}(\bfu)\big) \right|_{\bfu=\hat{\bfu}_{F_{\epsilon,\bfz}}}
= \int_{\mathbb R^p} \bar{g}_c(\bfx;\hat{\bfu}_{F_{\epsilon,\bfz}}) \bfx  dF_{\epsilon,\bfz} (\bfx) \\
=& (1-\epsilon) \int_{\mathbb R^p} \bar{g}_c(\bfx;\hat{\bfu}_{F_{\epsilon,\bfz}}) \bfx dF(\bfx)
+ \epsilon \bar{g}_c(\bfz;\hat{\bfu}_{F_{\epsilon,\bfz}}) \bfz 
\end{aligned}
\end{equation*}

Next, we further Taylor expand $\bar{g}_{c}(\bfx; \bfu_{F_{\epsilon,\bfz}})$ using $\hat{\bfu}_{F_{\epsilon,\bfz}} = \hat{\bfu}_F + \rho \bfh + O(\rho^2)$
$$
\bar{g}_{c}(\bfx; \hat{\bfu}_{F_{\epsilon,\bfz}}) = \bar{g}_{c}(\bfx; \hat{\bfu}_{F})
+ \rho \bfh \frac{\partial}{\partial\bfu} \bar{g}_{c}(\bfx; {\bfu}) \Big|_{\bfu=\hat{\bfu}_{F}}  + O(\rho^2)
$$
Using again the chain rule, we obtain that
$$
\frac{\partial}{\partial\bfu} \bar{g}_{c}(\bfx; {\bfu}) =
\bar{g}_{cc}(\bfx; {\bfu}) \bfx + \bar{g}_{c\bftheta}(\bfx; {\bfu}) \frac{\partial}{\partial\bfu} \bftheta_{F}(\bfu)
$$
The computation of the partial derivative $\frac{\partial}{\partial \bfu} \bftheta_{F}(\bfu)$ follows. Formula
$\bftheta_{F}(\bfu) = \argmax_{\bftheta} l_F(\bfx^T\bfu|\bftheta)$ implies that
\begin{equation*}
\frac{\partial}{\partial \bftheta} l_F(c(\bfu)|\bftheta) \Big|_{\bftheta=\bftheta_{F}(\bfu)} = 0 \ .
\end{equation*}
Differentiating with respect to $\bfu$ and using the implicit function theorem, we get
\begin{equation*}
\begin{aligned}
\frac{\partial}{\partial \bfu} \bftheta_{F}(\bfu) &=
- \frac{\partial}{\partial \bfu} \frac{\partial}{\partial \bftheta} l_F(c(\bfu)|\bftheta) \Big|_{\bftheta=\bftheta_{F}(\bfu)} \left[ \frac{\partial^2}{\partial \bftheta^2} l_F(c(\bfu)|\bftheta) \Big|_{\bftheta=\bftheta_{F}(\bfu)} \right]^{-1} \\
&= - \int_{\mathbb R^p}  \bfx \bar{g}_{c\bftheta}(\bfx;\bfu) dF(\bfx)
\left[ \int_{\mathbb R^p} \bar{g}_{\bftheta\bftheta}(\bfx;\bfu)dF(\bfx) \right]^{-1}
\end{aligned}
\end{equation*}

Thus,
\begin{equation*}
\begin{aligned}
&\bar{g}_{c}(\bfx; \hat{\bfu}_{F_{\epsilon,\bfz}}) = \bar{g}_{c}(\bfx; \hat{\bfu}_{F}) \\
+& \rho \bfh \left[\bar{g}_{cc}(\bfx; \hat{\bfu}_F) \bfx + \int_{\mathbb R^p}  \bfx \bar{g}_{c\bftheta}(\bfx;\hat{\bfu}_{F}) dF(\bfx)
\left[ \int_{\mathbb R^p} \bar{g}_{\bftheta\bftheta}(\bfx;\hat{\bfu}_{F})dF(\bfx) \right]^{-1} \bar{g}_{c\bftheta}(\bfx; {\hat{\bfu}_F}) \right] + O(\rho^2)
\end{aligned}
\end{equation*}

Overall, (\ref{opt:cond:perturbed}) becomes
\begin{equation*}
\begin{aligned}
&\left[ \bfP_{\hat{\bfu}_F} - \rho (\hat{\bfu}_{F} \bfh^T + \bfh \hat{\bfu}_{F}^T) + O(\rho^2) \right] \cdot
\left[ (1-\epsilon) \int_{\mathbb R^p} \bar{g}_c(\bfx;\hat{\bfu}_{F_{\epsilon,\bfz}}) \bfx dF(\bfx)
+ \epsilon \bar{g}_c(\bfz;\hat{\bfu}_{F_{\epsilon,\bfz}}) \bfz  \right] = 0 \\
\Rightarrow&
\bfP_{\hat{\bfu}_F} \int_{\mathbb R^p} \bar{g}_c(\bfx;\hat{\bfu}_{F_{\epsilon,\bfz}}) \bfx dF(\bfx)
- \rho (\hat{\bfu}_{F} \bfh^T + \bfh \hat{\bfu}_{F}^T) \int_{\mathbb R^p} \bar{g}_c(\bfx;\hat{\bfu}_{F}) \bfx dF(\bfx) + O(\rho^2) \\
&= \epsilon \bfP_{\hat{\bfu}_F} \left[\int_{\mathbb R^p} \bar{g}_c(\bfx;\hat{\bfu}_{F}) \bfx dF(\bfx)
- \bar{g}_c(\bfz;\hat{\bfu}_{F_{\epsilon,\bfz}}) \bfz \right] + O(\epsilon\rho) \\
\Rightarrow&
\rho \bfh \left[ \int_{\mathbb R^p}\bar{g}_{cc}(\bfx; \hat{\bfu}_F) \bfx^T \bfx dF(\bfx)
+ \int_{\mathbb R^p} \bar{g}_c(\bfx;\hat{\bfu}_{F}) \hat{\bfu}_{F}^T \bfx dF(\bfx) \right. \\
&+ \left. \int_{\mathbb R^p}  \bfx \bar{g}_{c\bftheta}(\bfx;\hat{\bfu}_{F}) dF(\bfx)
\left[ \int_{\mathbb R^p} \bar{g}_{\bftheta\bftheta}(\bfx;\hat{\bfu}_{F})dF(\bfx) \right]^{-1} \int_{\mathbb R^p} \bar{g}_{c\bftheta}(\bfx; {\hat{\bfu}_F}) \bfx dF(\bfx) \right] + O(\rho^2) \\
&= \epsilon \bfP_{\hat{\bfu}_F} \left[\int_{\mathbb R^p} \bar{g}_c(\bfx;\hat{\bfu}_{F}) \bfx dF(\bfx)
- \bar{g}_c(\bfz;\hat{\bfu}_{F_{\epsilon,\bfz}}) \bfz \right] + O(\epsilon\rho)
\end{aligned}
\end{equation*}
where we use the facts that
$$\bfP_{\hat{\bfu}_F} \bfh = \bfh$$
and 
$$\bfh^T \int_{\mathbb R^p} \bar{g}_c(\bfx;\hat{\bfu}_{F}) \bfx dF(\bfx)
= \bfP_{\hat{\bfu}_F} \int_{\mathbb R^p} \bar{g}_c(\bfx;\hat{\bfu}_{F}) \bfx dF(\bfx)
= \bfP_{\hat{\bfu}_F} \left. \frac{\partial}{\partial\bfu} l_{F}\big(\bfu|\bftheta_{F}(\bfu)\big) \right|_{\bfu=\hat{\bfu}_{F}} = 0
$$

Thus, the influence function is
$$
IF_{\hat{\bfu}_F} (\bfz, F) = \lim_{\epsilon\to0} \frac{\rho\bfh}{\epsilon} = \bfA^{-1} \bfb
$$
where
\begin{equation*}
\begin{aligned}
\bfA &= \bfI_{d} \left[ \int_{\mathbb R^p}\bar{g}_{cc}(\bfx; \hat{\bfu}_F) \bfx^T \bfx dF(\bfx)
+ \int_{\mathbb R^p} \bar{g}_c(\bfx;\hat{\bfu}_{F}) \hat{\bfu}_{F}^T \bfx dF(\bfx) \right] \\
&+ \int_{\mathbb R^p}  \bfx \bar{g}_{c\bftheta}(\bfx;\hat{\bfu}_{F}) dF(\bfx)
\left[ \int_{\mathbb R^p} \bar{g}_{\bftheta\bftheta}(\bfx;\hat{\bfu}_{F})dF(\bfx) \right]^{-1} \int_{\mathbb R^p} \bar{g}_{c\bftheta}(\bfx; {\hat{\bfu}_F}) \bfx dF(\bfx)
\end{aligned}
\end{equation*}
and
$$
\bfb = \bfP_{\hat{\bfu}_F} \left[\int_{\mathbb R^p} \bar{g}_c(\bfx;\hat{\bfu}_{F}) \bfx dF(\bfx)
- \bar{g}_c(\bfz;\hat{\bfu}_{F_{\epsilon,\bfz}}) \bfz \right]
$$

\clearpage
%\bibliographystyle{apalike}
%\bibliography{AishaRef1}

\end{document}